\newenvironment{myquote}{\list{}{\leftmargin=5pt\rightmargin=0in\topsep=3pt}\item[]}{\endlist}
\newcommand{\SB}{\{\,}%
\newcommand{\SM}{\;{:}\;}%
\newcommand{\SE}{\,\}}%
\newcommand{\SBs}{\{}%
\newcommand{\SEs}{\}}%
\renewcommand{\P}{\text{\normalfont P}}
\newcommand{\NP}{\text{\normalfont NP}}
\newcommand{\ThetaP}[1]{\ensuremath{\Theta^{\mtext{p}}_{#1}}}
\newcommand{\NN}{\mathbb{N}}
\newcommand{\K}{\mtext{\textsf{K}}}
\newcommand{\mtext}[1]{\text{\normalfont #1}}
\newcommand{\Mod}[1]{\mtext{\textsf{Mod}\ensuremath{(#1)}}}
\newcommand{\Var}[1]{\mtext{\textsf{Var}\ensuremath{(#1)}}}
\newcommand{\Lit}[1]{\mtext{\textsf{Lit}\ensuremath{(#1)}}}
\newcommand{\lin}[0]{\mtext{\textsf{lin}}}
\newcommand{\lintop}[0]{\mtext{\textsf{lin-top}}}
\newcommand{\red}[0]{\mtext{\textsf{red}}}
\newcommand{\green}[0]{\mtext{\textsf{green}}}
  \renewenvironment{proof}{\vspace{-2mm}\begin{pf}}{\qed\end{pf}}
  \theoremstyle{theorem}
  \newtheorem{proposition}[theorem]{Proposition}
  \newtheorem{proposition*}[theorem]{Proposition$^{\star}$}
  \newtheorem{theorem*}[theorem]{Theorem$^{\star}$}
  \theoremstyle{remark}
  \newtheorem{observation}[theorem]{Observation}
\DeclareRobustCommand{\DE}[3]{#2}
\title{Expressing Linear Orders Requires Exponential-Size DNNFs}
\titlerunning{Expressing Linear Orders Requires Exponential-Size DNNFs}
\author[]{Ronald de Haan}
\affil[]{Institute for Logic, Language and Information\\University of Amsterdam\\\url{me@ronalddehaan.eu}}
\authorrunning{R.~de Haan}
\subjclass{}
\keywords{}
\begin{document}

\maketitle

\begin{abstract}
We show that any DNNF circuit that expresses the set of linear orders over
a set of~$n$ candidates must be of size~$2^{\Omega(n)}$.
Moreover, we show that there exist DNNF circuits of size~$2^{O(n)}$
expressing linear orders over~$n$ candidates.
\end{abstract}

%%%
%%% INTRODUCTION
%%%
\section{Introduction}

This report considers a technical question that plays a role in the investigation
of the expressivity and efficiency of different knowledge representation
formalisms for social choice applications.
In particular, we consider the formalism of Boolean circuits in
\emph{Decomposable Negation Normal Form (DNNF)}
(or DNNF circuits).
This is a formalism that has been studied in the setting of
\emph{knowledge compilation}
and that enjoys many positive algorithmic properties
\cite{DarwicheMarquis02}.
We study the question whether the formalism of DNNF circuits
can be used to express linear preferences in an efficient and compact
way.
We will answer this question in the negative:
any DNNF circuit that expresses the set of linear orders over
a set of~$n$ candidates must be of size~$2^{\Omega(n)}$---%
that is, of size exponential in~$n$.

This result is of relevance for the field of computational social choice.
DNNF circuits can be used as a representation formalism for the
framework of \emph{Judgment Aggregation},
allowing many judgment aggregation procedures to be carried out
efficiently \cite{DeHaan18a}.
The result of this paper shows that simulating preference
aggregation in judgment aggregation using DNNF circuits
results in an exponential overhead---thereby preventing this
approach to lead to efficient algorithms.

Several results known from the literature together already
implied a (weaker) lower bound on the size of DNNF circuits
expressing linear orders:
(1)~computational intractability results (\ThetaP{2}-completeness)
for the Kemeny rule in preference aggregation
\cite{HemaspaandraSpakowskiVogel05},
(2)~polynomial-time solvability for the Kemeny procedure
in judgment aggregation when the integrity constraint is expressed
using a DNNF circuit \cite{DeHaan18a},
and (3)~the fact that the Kemeny rule in preference aggregation
can be simulated by the Kemeny procedure in judgment aggregation
when the integrity constraint expresses linear orders over the
set of candidates \cite{Endriss18}.
These three results together imply the (conditional) result
that DNNF circuits expressing linear orders must be of
superpolynomial size, unless~$\P = \NP$.
The result in this report strengthens this results in two ways:
(A)~our result is unconditional (i.e., it does not rely on the
assumption that~$\P \neq \NP$), and
(B)~our result gives an exponential lower bound,
rather than a mere superpolynomial lower bound.

%%%
%%% PRELIMINARIES
%%%
\section{Preliminaries}

We introduce several technical notions that play a role
in the result of this report.

%%%
\subsection{Propositional Logic}

Propositional formulas are constructed from propositional variables
using the Boolean operators~$\wedge,\vee,\rightarrow$, and~$\neg$.
A \emph{literal} is a propositional variable~$x$ (a \emph{positive literal})
or a negated variable~$\neg x$ (a \emph{negative literal}).
A \emph{clause} is a finite set of literals,
not containing a complementary pair~$x$,~$\neg x$,
and is interpreted as the disjunction of these literals.
A formula in \emph{conjunctive normal form (CNF)}
is a finite set of clauses, interpreted as the conjunction
of these clauses.

For a propositional formula~$\varphi$,~$\Var{\varphi}$ denotes
the set of all variables occurring in~$\varphi$.
Moreover, for a set~$X$ of variables,~$\Lit{X}$ denotes
the set of all literals over variables in~$X$,
i.e.,~$\Lit{X} = \SB x, \neg x \SM x \in X \SE$.

We use the standard notion of \emph{(truth)
assignments}~$\alpha : \Var{\varphi} \rightarrow \SBs 0,1 \SEs$
for Boolean formulas and \emph{truth} of a formula
under such an assignment.
For any formula~$\varphi$ and any truth assignment~$\alpha$,
we let~$\varphi[\alpha]$ denote the formula obtained from~$\varphi$
by instantiating variables~$s$ in the domain of~$\alpha$ with~$\alpha(x)$
and simplifying the formula accordingly.
By a slight abuse of notation,
if~$\alpha$ is defined on all~$\Var{\varphi}$,
we let~$\varphi[\alpha]$
denote the truth value of~$\varphi$ under~$\alpha$.
%By \SAT{} we denote the problem of deciding whether a given
%propositional formula is satisfiable.
%%
%We define the size~$\Card{\varphi}$
%of a propositional formula~$\varphi$ as the total number of
%(occurrences of) Boolean operators and propositional variables
%in~$\varphi$.

%%%
\subsection{Boolean Functions Expressing Linear Orders}

We consider a family of Boolean functions whose models correspond
to linear orders over a set of~$n$ elements.
A linear order over a set~$A$ is a binary relation~${\prec} \subseteq A \times A$
that is irreflexive (for all~$a \in A$,~$a \not\prec a$),
transitive (for all~$a,b,c \in A$, if~$a \prec b$ and~$b \prec c$ then~$a \prec c$)
and complete (for all~$a,b \in A$, either~$a \prec b$ or~$b \prec a$).

\begin{definition}
The family of
propositional functions~$\SBs \lin_n \SEs_{n \in \NN}$
is defined in such a way that for each~$n \in \NN$ the function~$\lin_n$
is the Boolean function over the variables~$\SB x_{i,j} \SM i,j \in [n], i < j \SE$
such that~$f[\alpha]$ is true if and only if there exists
some linear order~${\prec} \subseteq [n] \times [n]$
such that for each~$i,j \in [n]$ with~$i < j$
it holds that~$\alpha(x_{i,j}) = 1$ if and only if~$i \prec j$
(and thus that~$\alpha(x_{i,j}) = 0$ if and only if~$j \prec i$).
\end{definition}

That is, for each~$n \in \NN$, the models of~$\lin_n$
are in one-to-one correspondence with linear orders~$\prec$ over~$[n]$,
where~$x_{i,j}$ is true if and only if~$i \prec j$.

\begin{observation}
The family~$\SBs \lin_n \SEs_{n \in \NN}$ of Boolean functions
can be expressed with a family~$\SBs \varphi_n \SEs_{n \in \NN}$
of 3CNF formulas that is of size~$O(n^3)$.
\end{observation}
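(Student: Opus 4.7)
The plan is to construct, for each $n$, a 3CNF formula $\varphi_n$ over the variables $\SB x_{i,j} \SM i,j \in [n], i < j \SE$ whose models are exactly the assignments corresponding to linear orders on~$[n]$. The idea is to encode exactly the obstruction to being a linear order, namely the existence of a directed 3-cycle in the tournament induced by the assignment.

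Concretely, for every triple $i < j < k$ in~$[n]$, I would add the two clauses
\[
  (\neg x_{i,j} \vee \neg x_{j,k} \vee x_{i,k})
  \quad\text{and}\quad
  (x_{i,j} \vee x_{j,k} \vee \neg x_{i,k}).
\]
The first clause rules out the 3-cycle $i \prec j \prec k \prec i$ and the second rules out the reverse 3-cycle $k \prec j \prec i \prec k$; every other assignment to $\SBs x_{i,j}, x_{j,k}, x_{i,k} \SEs$ corresponds to a linear order on $\SBs i,j,k \SEs$. Taking the conjunction over all such triples yields a 3CNF formula with $2\binom{n}{3} = O(n^3)$ clauses, as required.

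For correctness, observe that any truth assignment $\alpha$ on the variables already determines a tournament ${\prec_\alpha}$ on~$[n]$ (reading $i \prec_\alpha j$ off $\alpha(x_{i,j})$ for $i < j$, and taking $j \prec_\alpha i$ otherwise), so irreflexivity and completeness are automatic. It therefore suffices to show that $\alpha$ satisfies $\varphi_n$ if and only if ${\prec_\alpha}$ is transitive. By construction, $\alpha$ satisfies $\varphi_n$ iff ${\prec_\alpha}$ contains no directed 3-cycle, so the task reduces to proving that a tournament is transitive iff it is free of 3-cycles.

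The one non-trivial ingredient is this standard tournament lemma: a tournament with any directed cycle has a directed 3-cycle, which I would prove by induction on cycle length. Given a cycle $v_1 \prec \cdots \prec v_k \prec v_1$ with $k \geq 4$, one looks at the edge between $v_1$ and $v_3$: if $v_1 \prec v_3$ then $v_1 \prec v_3 \prec \cdots \prec v_k \prec v_1$ is a shorter cycle, while if $v_3 \prec v_1$ then $v_1 \prec v_2 \prec v_3 \prec v_1$ is already a 3-cycle. This is the main (albeit very short) obstacle; everything else is just bookkeeping of the clause count.
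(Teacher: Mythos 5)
Your construction is correct and is essentially the same as the paper's: the paper conjoins the transitivity clause $(\neg x_{i,j} \vee \neg x_{j,k} \vee x_{i,k})$ over all ordered triples of distinct elements (with $x_{j,i}$ standing for $\neg x_{i,j}$), which for each unordered triple collapses to exactly your two clauses. Your added correctness discussion is fine, though the tournament lemma about long cycles is not actually needed, since transitivity is a condition on triples and your clauses encode it directly.
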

\begin{proof}
Take an arbitrary~$n \in \NN$.
Define~$\varphi_n$ as follows:
\[ \varphi_n = \bigwedge\limits_{i,j,k \in [n] \atop i \neq j, i \neq k, j \neq k}
  (\neg x_{i,j} \vee \neg x_{j,k} \vee x_{i,k}), \]
where~$x_{j,i}$ denotes~$\neg x_{i,j}$ for each~$i,j \in [n]$ with~$j < i$.
\end{proof}

%\begin{definition}
%The family of
%propositional functions~$\SBs \lin_n \SEs_{n \in \NN}$
%is defined in such a way that for each~$n \in \NN$ the function~$\lin_n
%: \SB x_{i,j} \SM 1 \leq i < j \leq n \SE \rightarrow \SBs 0,1 \SEs$
%is expressed by the following propositional formula~$\varphi_n$:
%\[ \varphi_n = \bigwedge\limits_{i,j,k \in [n] \atop i \neq j, i \neq k, j \neq k}
%  (x_{i,j} \wedge x_{j,k}) \rightarrow x_{i,k}, \]
%where~$x_{j,i}$ denotes~$\neg x_{i,j}$ for each~$1 \leq i < j \leq n$.
%\end{definition}
%
%Intuitively, for each~$n \in \NN$, the models of~$\lin_n$
%are in one-to-one correspondence with linear orders~$\prec$ over~$[n]$,
%where~$x_{i,j}$ is true if and only if~$i \prec j$
%(and~$x_{i,j}$ is false if and only if~$i \not\prec j$)---%
%for~$i \neq j$.

%%%
\subsection{DNNF Circuits}

Boolean circuits in \emph{Decomposable Negation Normal Form}
(or \emph{DNNF circuits})
are a particular class of Boolean circuits in
\emph{Negation Normal Form (NNF)}
\cite{DarwicheMarquis02}.
A Boolean circuit~$C$ in NNF is a direct acyclic graph with a single root
(a node with no ingoing edges) where each leaf is labelled
with~$\top$,~$\bot$,~$x$ or~$\neg x$ for a propositional variable~$x$,
and where each internal node is labelled with~$\wedge$ or~$\vee$.
(An arc in the graph from~$N_1$ to~$N_2$ indicates that~$N_2$ is a child
node of~$N_1$.)
The set of propositional variables occurring in~$C$ is denoted
by~$\Var{C}$.
For any truth assignment~$\alpha : \Var{C} \rightarrow \SBs 0,1 \SEs$, we define
the truth value~$C[\alpha]$ assigned to~$C$ by~$\alpha$ in the usual
way, i.e., each node is assigned a truth value based on its label and
the truth value assigned to its children, and the truth value assigned to~$C$
is the truth value assigned to the root of the circuit.
DNNF circuits are Boolean circuits in NNF that satisfy the additional
property of decomposability.
A circuit~$C$ is \emph{decomposable} if for each conjunction in the circuit,
the conjuncts do not share variables.
That is, for each node~$d$ in~$C$ that is labelled with~$\wedge$ and for any
two children~$d_1,d_2$ of this node, it holds that~$\Var{C_1} \cap \Var{C_2} = \emptyset$,
where~$C_1,C_2$ are the subcircuits of~$C$ that have~$d_1,d_2$ as root,
respectively.

%%%
\subsection{Rectangle Covers}

We introduce the notion of rectangle covers
(see, e.g.,~\cite{BovaCapelliMengelSlivovsky16}).
Let~$X$ be a finite set of propositional variables.
A \emph{partition} of~$X$ is a sequence of pairwise disjoint
subsets of~$X$ whose union is~$X$.
A partition~$(X_1,X_2)$ of~$X$ is called \emph{balanced}
if~$|X|/3 \leq \min(|X_1|,|X_2|)$.

Let~$(X_1,X_2)$ be a partition of~$X$.
For truth assignments~$b_1 : X_1 \rightarrow \SBs 0,1 \SEs$
and~$b_2 : X_2 \rightarrow \SBs 0,1 \SEs$
we let~$b_1 \cup b_2 : X \rightarrow \SBs 0,1 \SEs$
denote the (unique) truth assignment that agrees with both~$b_1$
and~$b_2$.
For sets~$B_1 \subseteq \SBs 0,1 \SEs^{X_1}$
and~$B_2 \subseteq \SBs 0,1 \SEs^{X_2}$ of truth assignments
to~$X_1$ and~$X_2$, respectively, we
let~$B_1 \times B_2 = \SB b_1 \cup b_2 \SM b_1 \in B_1, b_2 \in B_2 \SE$.

A \emph{(combinatorial) rectangle} over~$X$ is a
set~$R \subseteq \SBs 0,1 \SEs^{X}$ of truth assignments to the
variables~$X$ such that there exists an underlying partition~$(X_1,X_2)$
of~$X$ and sets~$R_1 \subseteq \SBs 0,1 \SEs^{X_1}$
and~$R_2 \subseteq \SBs 0,1 \SEs^{X_2}$ of truth assignments
to~$X_1$ and~$X_2$, respectively, such that~$R = R_1 \times R_2$.
A rectangle is \emph{balanced} if its underlying partition
is balanced.

Let~$f : X \rightarrow \SBs 0,1 \SEs$ be a Boolean function over the
set~$X$ of variables.
A finite set~$\SBs R^1,\dotsc,R^t \SEs$ of rectangles over~$X$
is a \emph{rectangle cover} of~$f$ if:
\[ \Mod{f} = \SB (\alpha : X \rightarrow \SBs 0,1 \SEs) \SM f[\alpha] = 1 \SE
= \bigcup\limits_{1 \leq i \leq t} R^i. \]
A rectangle cover is called \emph{balanced} if each rectangle in
the cover is balanced.

%%%
%%% THE RESULT
%%%
\section{The Result}

In this section, we establish the following result.

\begin{theorem}
\label{thm:linear-dnnf-exponential}
Every family~$\SBs \Gamma_n \SEs_{n \in \NN}$
of DNNF circuits that expresses the family~$\SBs \lin_n \SEs_{n \in \NN}$
of Boolean functions expressing linear orders
is of size~$2^{\Omega(n)}$.
\end{theorem}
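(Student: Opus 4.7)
The plan is to reduce the problem, via the standard transfer result of Bova, Capelli, Mengel, and Slivovsky, to a lower bound on the size of balanced rectangle covers of $\lin_n$. That result guarantees that every DNNF circuit of size $s$ computing a Boolean function $f$ over a variable set $X$ gives rise to a balanced rectangle cover of $\Mod{f}$ whose number of rectangles is at most $s$. Hence, to prove the theorem, it suffices to show that every balanced rectangle cover of $\lin_n$ has size $2^{\Omega(n)}$.

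To do this, I would show that for every balanced partition $(X_1,X_2)$ of the variable set of $\lin_n$, any single rectangle $R = R_1 \times R_2$ with this underlying partition whose elements are all models of $\lin_n$ can contain only a small fraction of the linear orders. Concretely, for each such partition I would produce a \emph{fooling set}, that is, a collection $\prec_1, \dots, \prec_k$ of linear orders on $[n]$ with $k = 2^{\Omega(n)}$ such that for any $s \neq t$ the hybrid assignment that agrees with $\prec_s$ on $X_1$ and with $\prec_t$ on $X_2$ fails to encode a linear order. Since a rectangle with underlying partition $(X_1,X_2)$ is closed under such hybridization, it cannot contain two members of the fooling set, so any cover restricted to rectangles with partition $(X_1,X_2)$ already needs at least $k$ rectangles. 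Applying this to each partition appearing in the balanced cover gives the desired $2^{\Omega(n)}$ lower bound.

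The main obstacle is constructing the fooling set uniformly across \emph{all} balanced partitions. Viewing the variables $x_{i,j}$ as edges of $K_n$, balancedness forces each side of the partition to contain $\Omega(n^2)$ edges. I would exploit this richness via an averaging or Ramsey-style argument to extract $\Omega(n)$ pairwise vertex-disjoint \emph{swap gadgets}: small groups of candidates whose internal ordering admits two alternative orientations, with the edges encoding the swap distributed between $X_1$ and $X_2$ so that mixing a choice from one gadget with the other across the partition forces a transitivity violation (a directed cycle on three of the candidates). One free bit per gadget then yields $2^{\Omega(n)}$ linear orders that form a fooling set under the partition. Pinning down the gadget and the averaging/Ramsey step is the technical core; once this is done, combining it with the rectangle-cover transfer from DNNFs completes the proof.
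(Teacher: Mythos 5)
Your first step (transferring the DNNF size bound to the size of a balanced rectangle cover via Bova--Capelli--Mengel--Slivovsky) and your core gadget (viewing the variables as edges of $\K_n$, using balancedness to extract $\Omega(n)$ vertex-disjoint triangles whose edges straddle the partition, so that hybridizing two orderings across the partition forces a transitivity violation inside some triangle) both match the paper, which packages the gadget extraction as a concrete edge-coloring lemma rather than an unspecified ``averaging or Ramsey-style argument.'' However, the way you convert the gadget into a lower bound has a genuine gap. A fooling set $F_P$ constructed for a particular balanced partition $P$ only shows that each rectangle \emph{whose underlying partition is $P$} contains at most one element of $F_P$. In a balanced rectangle cover, every rectangle may use a different partition, and a rectangle with a different partition $P'$ can perfectly well contain many elements of $F_P$. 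So ``applying this to each partition appearing in the cover'' does not bound the total number of rectangles: the elements of $F_P$ could all be absorbed by rectangles with other partitions, and there are $2^{\Theta(n^2)}$ balanced partitions, so no union bound rescues this. Fooling sets give best-partition lower bounds, not multi-partition ones.

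The fix --- and what the paper actually does --- is to turn the same gadget into a partition-oblivious \emph{counting} bound: for any single balanced rectangle $R \subseteq \Mod{\lin_n}$ with its own partition, each of the $k = \Omega(n)$ disjoint mixed triangles rules out at least one of the six relative orderings of its three vertices for all assignments in $R$ (the forbidden ordering is chosen using a fixed $\alpha \in R_1$, since combining it with the complementary pattern on the $X_2$-edges of the triangle would create a directed $3$-cycle). Since the six orderings of each triangle split $\Mod{\lin_n}$ into equal-sized disjoint classes, this yields $|R| \leq n! \cdot (5/6)^{k} = n!/2^{\Omega(n)}$. Dividing $|\Mod{\lin_n}| = n!$ by this maximum rectangle size bounds the number of rectangles in \emph{any} balanced cover, regardless of which partition each rectangle uses. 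Your gadgets already essentially deliver this size bound; you should state the conclusion as a bound on $|R|$ rather than as a fooling set.
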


We will use two lemmas to establish this result.

\begin{lemma}
\label{lem:1}
Let~$\K_n = (V,E)$ be the complete graph on~$n$ vertices.
Moreover, let~$\rho : E \rightarrow \SBs \red,\green \SEs$ be
an edge coloring of~$\K_n$ (using two colors)
that colors at least~$n/3$ edges
with~$\red$ and at least~$n/3$ edges with~$\green$.
Then there exists at least~$\frac{1}{100}n$ vertices~$v \in V$
such that~$\rho$ colors at least~$\frac{1}{100}n$ edges
that are adjacent to~$v$ with~$\red$
and colors at least~$\frac{1}{100}n$ edges
that are adjacent to~$v$ with~$\green$.
\end{lemma}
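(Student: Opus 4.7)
The plan is a double-counting (averaging) argument. For each vertex $v \in V$, write $r(v)$ and $g(v)$ for the number of red and green edges adjacent to $v$, so that $r(v) + g(v) = n - 1$. The handshaking identities $\sum_v r(v) = 2R$ and $\sum_v g(v) = 2G$, where $R \geq n/3$ and $G \geq n/3$ are the total numbers of red and green edges guaranteed by the hypothesis, will drive the argument.

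First, I would partition $V$ into four blocks according to whether $r(v) \geq n/100$ and whether $g(v) \geq n/100$: let $V_{RG}$ be the block where both hold, $V_R$ the block where only the red threshold holds, $V_G$ the block where only the green threshold holds, and $V_\emptyset$ the block where neither holds. The conclusion to prove is $|V_{RG}| \geq n/100$.

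To obtain a contradiction, I would assume $|V_{RG}| < n/100$. For every $v \in V_G \cup V_\emptyset$ the inequality $r(v) < n/100$ holds by definition, and symmetrically $g(v) < n/100$ for every $v \in V_R \cup V_\emptyset$. Pairing these with the crude bound $r(v), g(v) \leq n - 1$ on the remaining vertices, the two identities $\sum_v r(v) = 2R$ and $\sum_v g(v) = 2G$ translate into linear inequalities on the block sizes $|V_R|$, $|V_G|$, $|V_{RG}|$, $|V_\emptyset|$ that use the hypotheses $R \geq n/3$ and $G \geq n/3$ directly, without strengthening them.

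The main obstacle will be the final bookkeeping: I expect a short case analysis comparing $|V_R|$ and $|V_G|$ against $n/100$, extracting the needed lower bound on $|V_{RG}|$ by exploiting the red and green hypotheses jointly together with the identity $|V_R| + |V_G| + |V_{RG}| + |V_\emptyset| = n$, rather than using each color bound in isolation. The constant $1/100$ in the conclusion leaves considerable slack relative to $1/3$ in the hypothesis, so a fairly loose combination of the two inequalities should suffice to close the argument; the delicate point is making sure the hypotheses on $R$ and $G$ are truly used in tandem, since neither alone rules out a single dominant red-heavy vertex that would spoil the conclusion.
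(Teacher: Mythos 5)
There is a genuine gap, and it comes in two parts. First, you commit to using the hypotheses $R \geq n/3$ and $G \geq n/3$ ``directly, without strengthening them'' --- but with that literal reading the lemma is false, so no such argument can close. Concretely: color a matching of $\lceil n/3 \rceil$ edges green and all remaining edges red; both color counts meet the $n/3$ threshold, yet every vertex is incident to at most one green edge, so for $n > 100$ not a single vertex satisfies the green condition. The ``$n/3$'' in the statement is evidently a slip for $|E|/3 = \frac{1}{3}\binom{n}{2}$: the paper's own proof silently restates the hypothesis as ``at least $|E|/3$ edges'' in its opening paragraph and later lower-bounds the doubly counted green edges by $2 \cdot \frac{1}{3}\binom{n}{2}$, and the application in Lemma~\ref{lem:2} genuinely supplies $\Theta(n^2)$ edges of each color (a balanced partition of the $\binom{n}{2}$ variables). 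You cannot be faulted for the typo, but an attempt that stays within the literal hypotheses is proving a false statement, and the degree sums $\sum_v r(v) = 2R$, $\sum_v g(v) = 2G$ with $R, G \geq n/3$ are far too weak to drive anything.

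Second, even after upgrading to $R, G \geq \frac{1}{3}\binom{n}{2}$, your plan stops short of the decisive step. The handshake identities plus your block decomposition yield only \emph{linear} constraints in the block sizes, of the shape $(|V_{RG}| + |V_G|)(n-1) + |V_R|\cdot\frac{n}{100} \geq \frac{n(n-1)}{3}$ and its red twin; these are simultaneously satisfiable with $|V_{RG}| < n/100$ (take $|V_R| \approx |V_G| \approx n/2$), so no ``fairly loose combination'' of them, nor any case analysis on $|V_R|$ versus $|V_G|$, produces a contradiction. The missing idea is a \emph{quadratic}, cross-block count. The linear constraints do give $|V_R|, |V_G| \geq \frac{n}{3} - O(n/100) = \Omega(n)$; then the $|V_R|\,|V_G| = \Omega(n^2)$ edges between the two blocks must each be red or green, but every green one is charged to a green degree in $V_R$ (total $< |V_R| \cdot \frac{n}{100}$) and every red one to a red degree in $V_G$ (total $< |V_G| \cdot \frac{n}{100}$), forcing $|V_R|\,|V_G| < (|V_R| + |V_G|)\frac{n}{100} \leq \frac{n^2}{100}$ --- a contradiction. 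This count of edges running between the red-heavy and green-heavy classes is exactly the engine of the paper's proof (its ``Group~3'' computation after splitting $E$ into within-$V_{\red}$, within-$V_{\green}$, and cross edges). You even flag the right worry --- that the two hypotheses must be used ``in tandem'' --- but supply no mechanism for doing so; as written, the bookkeeping you describe cannot reach the conclusion.
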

\begin{proof}
Take some~$n \in \NN$, and
let~$\K_n = (V,E)$ be the complete graph on~$n$ vertices.
Without loss of generality, assume that~$n \geq 2$.
Moreover, let~$\rho : E \rightarrow \SBs \red,\green \SEs$ be
an edge coloring of~$\K_n$ (using two colors)
that colors at least~$|E|/3$ edges
with~$\red$ and at least~$|E|/3$ edges with~$\green$.
Let~$V_{\red} \subseteq V$ be the set of vertices in~$V$
that have more adjacent edges colored with~$\red$ by~$\rho$,
and similarly let~$V_{\green} \subseteq V$ be the set of vertices in~$V$
that have more adjacent edges colored with~$\green$ by~$\rho$.

We distinguish several cases:
(i)~there are at least~$\frac{1}{100}n$ vertices in~$V_{\red}$
that each have at least~$\frac{1}{100}n$ adjacent green edges,
(ii)~there are at least~$\frac{1}{100}n$ vertices in~$V_{\green}$
that each have at least~$\frac{1}{100}n$ adjacent red edges,
or~(iii) neither~(i) nor~(ii) is the case---i.e.,
there are less than~$\frac{1}{100}n$ vertices in~$V_{\red}$
that each have at least~$\frac{1}{100}n$ adjacent green edges
and there are less than~$\frac{1}{100}n$ vertices in~$V_{\green}$
that each have at least~$\frac{1}{100}n$ adjacent red edges.

Firstly consider case~(i).
Then, since all vertices in~$V_{\red}$ have at least~$\frac{1}{2}(n-1)
\geq \frac{1}{120}n$
adjacent green edges, we know that there are at least~$\frac{1}{120}n$
vertices with at least~$\frac{1}{120}n$ adjacent green edges
and at least~$\frac{1}{120}n$ adjacent red edges.
Thus the result follows.
The case of~(ii) is entirely similar.

Now we turn to the case of~(iii).
We show that~$|V_{\red}| \leq \frac{3}{4}n$.
Suppose, to derive a contradiction, that~$|V_{\red}| > \frac{3}{4}n$.
We now doubly count all edges.
The number of green edges (counted doubly) is
then less than~$\frac{1}{100}n \cdot (n-1) + \frac{296}{400}n \cdot \frac{1}{100}n +
\frac{1}{4}n \cdot (n-1) = \frac{10696}{40000}n^2 - \frac{26}{100}n
\leq \frac{11}{40}n^2 - \frac{1}{4}n$.
However, we know that the number of green edges (counted doubly)
is at least~$2 \cdot \frac{1}{3} \cdot \binom{n}{2} =
\frac{1}{3}n^2 - \frac{1}{3}n$.
Since~$n \geq 2$,~$\frac{11}{40}n^2 - \frac{1}{4}n
\leq \frac{1}{3}n^2 - \frac{1}{3}n$, which leads to a contradiction.
Thus, we can conclude that~$|V_{\red}| \leq \frac{3}{4}n$.
By an entirely similar argument, we have
that~$|V_{\green}| \leq \frac{3}{4}n$.
Thus, also,~$|V_{\red}| \geq \frac{1}{4}n$
and~$|V_{\green}| \geq \frac{1}{4}n$.

Now, without loss of generality we may assume
that~$|V_{\red}| \geq |V_{\green}|$---%
the case for~$|V_{\red}| \leq |V_{\green}|$ is entirely analogous.
Thus~$|V_{\green}| \leq \frac{1}{2}n$.
We now split the edges in~$E$ into three groups:
(1)~edges between two vertices in~$V_{\red}$,
(2)~edges between two vertices in~$V_{\green}$, and
(3)~edges between a vertex in~$V_{\red}$
and a vertex in~$V_{\green}$.
Again, we will doubly count all the edges in~$E$.
We will count the number of green edges.
There are at most~$\frac{1}{100}n \cdot (n-1) + \frac{296}{400}n \cdot \frac{1}{100}n
= \frac{696}{40000}n^2 - \frac{1}{100}n$ green
edges in Group~1 (doubly counted).
There are at most~$\frac{n}{2}(\frac{n}{2}-1)$ green edges
in Group~2 (doubly counted).
Since there are at least~$2 \cdot \frac{1}{3} \cdot \binom{n}{2} =
\frac{1}{3}n^2 - \frac{1}{3}n$ green edges in all groups
combined (doubly counted),
this means that there must be at
least~$\frac{2}{30}n^2 + \frac{1}{6}n$ green edges in Group~3
(doubly counted).
This means that there must be at least~$\frac{1}{30}n^2$
green edges in Group~3 (singly counted).
Then, since~$|V_{\red}| \leq \frac{3}{4}n$,
we know that for the nodes in~$V_{\red}$
the average number of adjacent green edges
is at least~$\frac{4}{90}n$.
Then, since each node in~$V_{\red}$ is adjacent
to at least~$0$ and at most~$n-1$ green edges,
and since~$|V_{\red}| \leq \frac{3}{4}n$,
there must be at least~$\frac{12}{360}n$ nodes in~$V_{\red}$
that are adjacent to at least~$\frac{4}{90}n$ green edges.
Thus, the result follows.
\end{proof}

\begin{lemma}
\label{lem:2}
Let~$n \in \NN$ % with~$n \geq 10$,
and let~$R$ be a balanced
rectangle over~$X = \SB x_{i,j} \SM i,j \in [n], i < j \SE$
such that~$R \subseteq \Mod{\lin_n}$.
Then~$|R| \leq n! / 2^{cn}$,
for~$c = \frac{1}{5200}$.
\end{lemma}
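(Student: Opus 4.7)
My plan is to combine Lemma~1 with the rectangle's product structure via a rank-profile encoding argument.

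First, since $R$ is a balanced rectangle with underlying partition $(X_1, X_2)$ of $X$, I view $(X_1, X_2)$ as a $2$-edge-coloring of $K_n$ (red for variables in $X_1$, green for those in $X_2$); the balance condition translates to each color class having at least $|X|/3$ edges. Applying Lemma~1 yields a set $S \subseteq [n]$ of at least $n/120$ vertices, each incident to at least $n/120$ red and at least $n/120$ green edges.

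Next, for each linear order $\pi$ corresponding to $(\alpha_1, \alpha_2) \in R_1 \times R_2$ and each $v \in S$, I decompose the rank as $\pi(v) = a_v(\alpha_1) + b_v(\alpha_2) + 1$, where $a_v(\alpha_1)$ counts red neighbors of $v$ that are below $v$ (determined by $\alpha_1$ alone) and $b_v(\alpha_2)$ counts green neighbors below $v$ (determined by $\alpha_2$ alone). Because $R$ is a rectangle, these two counts vary independently as $(\alpha_1, \alpha_2)$ ranges over $R_1 \times R_2$; yet for every such pair, the vector $(\pi(v))_{v \in [n]}$ must realize a bijection onto $[n]$.

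To exploit this tension, I would bound $|R|$ by encoding each $\pi \in R$ via its rank profile $(\pi(v))_{v \in S}$ together with residual information reconstructing $\pi$ on $[n] \setminus S$. The number of rank profiles is at most $n!/(n-|S|)!$; for a fixed profile $\mu$, the $\pi \in R$ with $\pi|_S = \mu$ correspond to the pairs $(\alpha_1, \alpha_2) \in R_1 \times R_2$ satisfying $a_v(\alpha_1) + b_v(\alpha_2) = \mu(v) - 1$ simultaneously for every $v \in S$. Independence of $\alpha_1$ and $\alpha_2$ in $R$, combined with these $|S|$ simultaneous constraints, should force a reduction factor of roughly $2^{|S|/43}$ in the count of admissible pairs, yielding $|R| \leq n!/2^{cn}$ with $c = 1/5200 \approx (1/120)\cdot(1/43)$.

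The main obstacle is quantifying this reduction factor multiplicatively across the vertices in $S$. A purely pairwise argument---exploiting that for each distinct pair $v, w \in S$ the images of $a_v - a_w$ on $R_1$ and $b_w - b_v$ on $R_2$ must be disjoint subsets of the integers---yields only polynomial savings. A more global combinatorial bound, perhaps via a Loomis--Whitney-style inequality applied to the rank profile or a direct fooling-set analysis of the compatibility matrix between $R_1$ and $R_2$, seems necessary to extract a constant-fraction loss per vertex in $S$ that accumulates to the claimed exponential bound.
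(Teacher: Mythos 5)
Your setup --- turning the partition $(X_1,X_2)$ into a two-coloring of $K_n$ and invoking Lemma~1 to obtain $\Omega(n)$ vertices incident to many edges of each color --- matches the paper exactly. But from that point on there is a genuine gap: the rank-profile decomposition $\pi(v) = a_v(\alpha_1) + b_v(\alpha_2) + 1$ is a different idea, and you never establish the per-vertex constant-factor loss that it would need. You acknowledge this yourself: the pairwise disjointness of the value sets of $a_v - a_w$ and $b_w - b_v$ gives only polynomial savings, and the appeal to a ``Loomis--Whitney-style inequality'' or a ``global fooling-set analysis'' is a placeholder for the entire substance of the lemma, not a proof step. As written, the argument does not yield any exponential bound.

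The missing idea in the paper's proof is to pass from the $\Omega(n)$ bichromatic vertices to $\Omega(n)$ \emph{vertex-disjoint triangles} each containing both colors, and then (by pigeonhole) to $k = \Omega(n)$ disjoint triangles each with, say, exactly one red edge $\SBs a_j, c_j\SEs \in X_1$ and two green edges in $X_2$. Fix any $\alpha \in R_1$; it orients the red edge of each triangle, and transitivity then \emph{forbids} exactly one of the six possible orderings of $\SBs a_j,b_j,c_j\SEs$ --- namely the one whose two green-edge values, combined with the reversal of $\alpha(x_{a_j,c_j})$, would form a cycle. No $\gamma \in R = R_1 \times R_2$ can realize that forbidden ordering, since its green part would lie in $R_2$ and could be recombined with $\alpha$ to produce a non-order inside $R$. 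Because the triangles are vertex-disjoint, the six orderings of each triangle partition $\Mod{\lin_n}$ into $6^k$ equal-sized classes, so excluding one ordering per triangle multiplies to $|R| \leq n! \cdot (5/6)^k \leq n!/2^{n/5200}$. It is precisely this multiplicativity across disjoint triangles --- each contributing an independent factor $5/6$ --- that your rank-profile approach lacks, and without a concrete substitute for it the claimed factor $2^{|S|/43}$ is unsupported.
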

\begin{proof}
Take an arbitrary~$n \in \NN$ with~$n \geq 10$,
and let~$R$ be a balanced rectangle
over~$X = \SB x_{i,j} \SM i,j \in [n], i < j \SE$
such that~$R \subseteq \Mod{\lin_n}$.
Let~$(X_1,X_2)$ be the underlying partition for~$R$,
and let~$R_1 \subseteq \SBs 0,1 \SEs^{X_1}$
and~$R_2 \subseteq \SBs 0,1 \SEs^{X_2}$ be the sets of truth
assignments such that~$R = R_1 \times R_2$.
Since~$R$ is balanced, we know that~$|X_1| \geq \frac{1}{3}|X|$
and that~$|X_2| \geq \frac{1}{3}|X|$.

Now consider the complete graph~$\K_n = (V,E)$ on~$n$ vertices,
where~$V = [n]$.
Moreover, let~$\rho : E \rightarrow \SBs \red,\green \SEs$ be
the edge coloring of~$\K_n$ that is defined as follows:
for each~$i,j \in [n]$ with~$i < j$, let~$\rho(\SBs i,j \SEs) = \red$
if~$x_{i,j} \in X_1$, and let~$\rho(\SBs i,j \SEs) = \green$
if~$x_{i,j} \in X_2$.
Since~$(X_1,X_2)$ is balanced, we know that~$\rho$ is
an edge coloring of~$\K_n$ (using only the colors~$\red$ and~$\green$)
that colors at least~$n/3$ edges
with~$\red$ and at least~$n/3$ edges with~$\green$.
Thus, Lemma~\ref{lem:1} applies,
and we can conclude that there are
at least~$\ell \geq \frac{1}{100}n$
vertices~$v_1,\dotsc,v_{\ell} \in V$
such that for each~$i \in [\ell]$
it holds that~$\rho$ colors at least~$\ell$ edges
adjacent to~$v_i$ with~$\red$
and colors at least~$\ell$ edges
adjacent to~$v_i$ with~$\green$.

Therefore in~$\K_n$ there exist at
least~$\frac{1}{200}n$ disjoint triangles that contain
at least one green edge and at least one red edge
(w.r.t.~$\rho$).
Then we also know that there are either
(i)~at least~$\frac{1}{400}n$ disjoint triangles
that contain exactly one red edge,
or (ii)~at least~$\frac{1}{400}n$ disjoint triangles
that contain exactly one green edge.
Without loss of generality, we assume that~(i) is the case---%
the case of~(ii) is entirely analogous.
Let these triangles be~$t_1,\dotsc,t_k$,
for~$k \geq \frac{1}{400}n$,
where for each~$i \in [k]$ it holds that
the vertices in~$t_i$ are~$a_i,b_i,c_i \in V$,
such that the edge~$\SBs a_i,c_i \SEs$ is colored
with~$\red$ by~$\rho$
(and the edges~$\SBs a_i,b_i \SEs$
and~$\SBs b_i,c_i \SEs$ are colored
with~$\green$ by~$\rho$).

Now, we may assume that~$R_1 \neq \emptyset$---%
otherwise~$R = \emptyset$ and the result would follow immediately.
Take some~$\alpha \in R_1$.
That is,~$\alpha : X_1 \rightarrow \SBs 0,1 \SEs$ is a truth assignment
to the variables in~$X_1$.
We will now define several partial truth assignments to the variables
in~$X_2$.
(In the remainder we let~$x_{j,i}$ denote~$\neg x_{i,j}$
for each~$i,j \in [n]$ with~$j<i$.)
For each~$j \in [k]$, we will define
the partial truth assignments~$\beta_{j}^{1},\dotsc,\beta_{j}^{6}$
as follows:
\[ \begin{array}{r l}
  \beta_j^1 = & \SBs
    x_{a_j,b_j} \mapsto 1, x_{b_j,c_j} \mapsto 1, x_{a_j,c_j} \mapsto 1 \SEs, \\ % abc
  \beta_j^2 = & \SBs
    x_{a_j,b_j} \mapsto 1, x_{b_j,c_j} \mapsto 0, x_{a_j,c_j} \mapsto 1 \SEs, \\ % acb
  \beta_j^3 = & \SBs
    x_{a_j,b_j} \mapsto 0, x_{b_j,c_j} \mapsto 1, x_{a_j,c_j} \mapsto 1 \SEs, \\ % bac
  \beta_j^4 = & \SBs
    x_{a_j,b_j} \mapsto 0, x_{b_j,c_j} \mapsto 1, x_{a_j,c_j} \mapsto 0 \SEs, \\ % bca
  \beta_j^5 = & \SBs
    x_{a_j,b_j} \mapsto 1, x_{b_j,c_j} \mapsto 0, x_{a_j,c_j} \mapsto 0 \SEs,
    \mtext{ and} \\ % cab
  \beta_j^6 = & \SBs
    x_{a_j,b_j} \mapsto 0, x_{b_j,c_j} \mapsto 0, x_{a_j,c_j} \mapsto 0 \SEs. \\ % cba
\end{array} \]
Moreover, if~$\alpha$ satisfies~$x_{a_j,c_j}$,
we let~$\beta_j = \beta_j^{6}$,
and if~$\alpha$ satisfies~$\neg x_{a_j,c_j}$,
we let~$\beta_j = \beta_j^{1}$.

Now, take any~$\overline{s} = (s_1,\dotsc,s_k) \in [6]^{k}$.
Then the set of truth assignments~$B_{\overline{s}} \subseteq \Mod{\lin_n}$
is defined to be those~$\gamma \in \Mod{\lin_n}$
that are consistent with~$\beta_j^{s_j}$, for each~$j \in [k]$.
For each~$\overline{s},\overline{s}' = \in [6]^{k}$,
it holds that~$|B_{\overline{s}}| = |B_{\overline{s}'}|$
and~$B_{\overline{s}} \cap B_{\overline{s}'} = \emptyset$.
Therefore, since~$|\Mod{\lin_n}| = n!$,
for each~$\overline{s} \in [6]^{k}$ it holds
that~$|B_{\overline{s}}| = n! \cdot (1/6)^k$.

We claim that there cannot be any~$\gamma \in R$
that agrees with~$\beta_j$, for any~$j \in [k]$.
Suppose that this were not the case, for some~$j \in [k]$.
We know that~$x_{a_j,b_j},x_{b_j,c_j} \in X_2$,
and~$x_{a_j,c_j} \in X_1$. Then there
must be some~$\alpha' \in R_2$ that agrees with~$\beta_j$.
However, we also know that there exists
some~$\alpha \in R_1$ such that~$\alpha(x_{a_j,c_j})$
together with~$\alpha'$ is not in~$\Mod{\lin_n}$
(by our selection of~$\beta_j$).
Therefore, we know that for each~$\gamma \in R$
and for each~$j \in [k]$,
it holds that~$\gamma$ agrees with
exactly one partial truth assignment in~$\SB \beta_j^\ell \SM \ell \in [6] \SE
\setminus \SBs \beta_j \SEs$.
Thus, we can conclude that
there are at most~$n! \cdot (5/6)^{k}$
different truth assignments~$\gamma \in R$.
In other words,~$|R| \leq n! \cdot (5/6)^{k}
= n! \cdot 2^{k \log (5/6)}
= n! / 2^{k \log (6/5)}
\leq n! / 2^{k/13}
\leq n! / 2^{n/5200}$.
\end{proof}

%Then we let~$\beta_j^+$ be the assignment that
%satisfies~$x_{a_j,b_j}$ and~$x_{b_j,c_j}$,
%and we let~$\beta_j^-$ be the assignment that
%satisfies~$\neg x_{a_j,b_j}$ and~$\neg x_{b_j,c_j}$.
%Moreover, we let~$\beta_j = \beta_j^+$
%if~$\alpha$ satisfies~$\neg x_{a_j,c_j}$,
%and we let~$\beta_j = \beta_j^-$
%if~$\alpha$ satisfies~$x_{a_j,c_j}$.
%
%Then we construct the following
%set~$B$ of partial assignments.
%\[ B = \left \{\ \bigcup\limits_{j \in [k]} \sigma(j,b_j) \SM
%(b_1,\dotsc,b_k) \in \SBs 0,1 \SEs^k \ \right \}, \]
%where~$\sigma(j,0) = \beta_j^-$ and~$\sigma(j,1) = \beta_j^+$.
%Then~$|B| = 2^k$.
%We will write~$B = \SBs \gamma_1,\dotsc,\gamma_{2^k} \SEs$.
%Moreover, we know that~$\gamma = \bigcup\nolimits_{j \in [k]} \beta_j \in B$.
%
%We claim that for each~$\gamma_{\ell} \in B$,
%there are at least~$(n-1)! / 2^k$ truth assignments in~$\Mod{\lin_n}$
%that agree with~$\gamma_{\ell}$.
%Consider~$V' = V \setminus \SBs v \SEs$.
%Then~$|V'| = n-1$.
%Moreover, each~$\gamma_{\ell}$ agrees with exactly~$1/2^k$
%of the linear orders over~$V'$---that is,
%with~$(n-1)! / 2^k$ linear orders.
%Moreover, each 
%
%\textcolor{red}{[todo]}

We are now ready to prove Theorem~\ref{thm:linear-dnnf-exponential}.

\begin{proof}[Proof of Theorem~\ref{thm:linear-dnnf-exponential}]
Take an arbitrary~$n \geq 2$.
Moreover, take a DNNF circuit~$C$ expressing~$\lin_n$.
We show that~$C$ is of size at least~$2^{cn}$,
for some constant~$c > 0$.
Let~$u$ denote the size of~$C$.
We know that this implies that
there is a balanced rectangle cover~$\SBs R^1,\dotsc,R^u \SEs$
of~$\lin_n$ of size~$u$ \cite{BovaCapelliMengelSlivovsky16}.
We know that~$|\Mod{\lin_n}| = n!$.
Moreover, by Lemma~\ref{lem:2}, we know that
every rectangle~$R^j$ in the rectangle cover~$\SBs R^1,\dotsc,R^u \SEs$
contains at most~$n! / 2^{cn}$ truth assignments in~$\Mod{\lin_n}$,
for some constant~$c > 0$.
Therefore,~$u \geq 2^{cn}$---%
in other words, the DNNF circuit is of size~$2^{\Omega(n)}$.
\end{proof}

We give an accompanying upper bound,
showing for each~$n \in \NN$ that we can
express~$\lin_n$ using a DNNF circuit
of size~$2^{O(n)}$.

\begin{proposition}
\label{prop:upper-bound}
For each~$n \in \NN$, there is a DNNF circuit
of size~$2^{O(n)}$ expressing~$\lin_n$.
\end{proposition}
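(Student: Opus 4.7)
The plan is to construct the DNNF circuit by recursion on subsets of~$[n]$, while sharing subcircuits between all~$2^n$ subsets so that the size does not blow up to~$2^{O(n \log n)}$. For each subset~$S \subseteq [n]$, I will build a single shared subcircuit~$D_S$ over the variables~$V_S = \SB x_{i,j} \SM i,j \in S, i<j \SE$ whose models are exactly the linear orders on~$S$. The final circuit is just~$D_{[n]}$.

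The recursion splits on the minimum element of the order. Concretely, for~$|S| \leq 1$ we take~$D_S = \top$; for~$|S| \geq 2$ we set
\[ D_S \;=\; \bigvee_{i \in S} \bigl( G_{i,S} \,\wedge\, D_{S \setminus \SBs i \SEs} \bigr), \]
where~$G_{i,S}$ is the conjunction of literals asserting ``$i \prec j$ for every~$j \in S \setminus \SBs i \SEs$''---that is,~$x_{i,j}$ for each~$j \in S$ with~$j>i$ and~$\neg x_{j,i}$ for each~$j \in S$ with~$j<i$. Correctness is immediate because every linear order on~$S$ has a unique minimum element.

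Decomposability at each AND node~$G_{i,S} \wedge D_{S \setminus \SBs i \SEs}$ is automatic from the choice of recursion: the gadget~$G_{i,S}$ only mentions variables incident to~$i$, i.e.\ variables in~$V_S \setminus V_{S \setminus \SBs i \SEs}$, whereas~$D_{S \setminus \SBs i \SEs}$ only mentions variables in~$V_{S \setminus \SBs i \SEs}$; these two variable sets are disjoint. The size bound then follows from the crucial observation that the DNNF is realized as a DAG in which each subcircuit~$D_S$ appears only once: there are~$2^n$ subsets of~$[n]$ and each~$D_S$ contributes at most~$1 + |S|$ gate nodes (one OR, and one AND per choice of minimum) together with~$O(|S|^2)$ edges to literals and to children~$D_{S \setminus \SBs i \SEs}$, giving total size~$\sum_{S \subseteq [n]} O(|S|^2) = O(n^2 \cdot 2^n) = 2^{O(n)}$.

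The only real pitfall is exactly this sharing: the naive tree expansion of the recursion yields roughly~$n!$ nodes, since an un-shared version rebuilds~$D_{S \setminus \SBs i \SEs}$ separately for every path through the recursion. As soon as we insist that a single DAG node is reused for each subset~$S$, the count collapses to~$2^{O(n)}$, matching (up to the constant in the exponent) the lower bound of Theorem~\ref{thm:linear-dnnf-exponential}.
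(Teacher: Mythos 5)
Your construction is correct and is essentially the same as the paper's: the paper builds nodes~$C_S$ indexed by the set~$S$ of already-placed elements and recurses by choosing the minimum of the remaining elements~$[n]\setminus S$, which is exactly your~$D_{[n]\setminus S}$ up to renaming. The decomposability argument, the DAG-sharing across the~$2^n$ subsets, and the resulting~$2^{O(n)}$ size bound all match.
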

\begin{proof}
Take some~$n \in \NN$,
and consider~$\lin_n$
over the variables~$\SB x_{i,j} \SM i,j \in [n], i < j \SE$.
We construct a DNNF circuit expressing~$\lin_n$
as follows.
Let~$T$ be the set of all subsets of~$[n]$---%
that is~$T = \SB S \SM S \subseteq [n] \SE$.
We know that~$|T| = 2^n$.
We introduce nodes~$C_{S}$ for each~$S \in T$,
and nodes~$C_{i,S}$ for each~$S \in T \setminus \SBs [n] \SEs$ and
each~$i \in [n] \setminus S$.
For each~$S \in T \setminus \SBs [n] \SEs$, we let:
\[ C_{S} = \bigvee\limits_{i \in [n] \setminus S} C_{i,S}. \]
For each~$S \in T \setminus \SBs [n] \SEs$ and each~$i \in [n] \setminus S$,
we let:
\[ C_{i,S} = \bigwedge\limits_{j \in [n] \setminus S \atop i \neq j}
  x_{i,j} \wedge C_{S \cup \SBs i \SEs}, \]
where~$x_{i,j}$ denotes~$\neg x_{j,i}$ if~$j < i$.
Finally, we let~$C_{[n]} = \top$.
We then let the node~$C_{\emptyset}$ be the root of the Boolean circuit.
It is straightforward to verify that the circuit
expresses~$\lin_n$.
Moreover, the circuit is in DNNF
because each node~$C_{S}$ contains only
variables~$x_{i,j}$ with~$i,j \in [n] \setminus S$.
\end{proof}

%%%
%%% CONCLUSION & FUTURE RESEARCH
%%%
\section{Conclusion \& Future Research}

This report contains the technical result that DNNF circuits expressing
linear orders over~$n$ elements must be of size~$2^{\Omega(n)}$.
Moreover, we provide a corresponding upper bound of~$2^{O(n)}$.
Future research includes investigating whether
the following functions~$\lintop_{n,k}$ can be expressed using
DNNF circuits of size~$f(k) \cdot n^{o(k)}$, for some
computable function~$f$.

\begin{definition}
Let~$n,k \in \NN$ with~$n > k$.
The propositional function~$\lintop_{n,k}$
is the Boolean function over the variables~$\SB x_{i,j} \SM i,j \in [n] \SE$
such that~$f[\alpha]$ is true if and only if there exists
some set~$K \subseteq [n]$ with~$|K| = k$ and
some linear order~${\prec} \subseteq K \times K$
such that for each~$i,j \in [n]$
it holds that:
\[ \alpha(x_{i,j}) = \begin{dcases*}
   1 & if~$i,j \in K$ and~$i \prec j$, \\
   0 & if~$i,j \in K$ and~$j \prec i$, \\
   1 & if~$i \in K$ and~$j \not\in K$, \\
   0 & if~$j \in K$ and~$i \not\in K$, and \\
   0 & if~$i,j \not\in K$. \\
\end{dcases*} \]
\end{definition}

We point out the following upper bound,
showing for each~$n,k \in \NN$ with~$n > k$ that we can
express~$\lintop_{n,k}$ using a DNNF circuit
of size~$n^{O(k)}$.

\begin{proposition}
\label{prop:upper-bound-toplin}
For each~$n,k \in \NN$ with~$n > k$, there is a DNNF circuit
of size~$n^{O(k)}$ expressing~$\lintop_{n,k}$.
\end{proposition}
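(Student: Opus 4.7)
The plan is to generalize the construction from Proposition~\ref{prop:upper-bound}
by indexing the nodes of the circuit with \emph{ordered prefixes} of the
linear order on~$K$, rather than with subsets. For every sequence
$\sigma = (i_1, \dotsc, i_t)$ of pairwise distinct elements of~$[n]$
with~$0 \leq t \leq k$, I will introduce a node~$C_\sigma$, with the
intended meaning: ``the first~$t$ elements of~$K$ in the linear order are
exactly~$i_1, \dotsc, i_t$, in this order''. The root of the circuit will
be~$C_{()}$, the node for the empty sequence.

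For~$|\sigma| < k$ I define
\[ C_\sigma = \bigvee\limits_{j \in [n] \setminus \sigma}
  \Big( \bigwedge\limits_{s \in [|\sigma|]} x_{i_s,j} \wedge C_{\sigma \cdot j} \Big), \]
where~$\sigma \cdot j$ denotes the extension of~$\sigma$ by~$j$; the
literals~$x_{i_s,j}$ record the new relations~$i_s \prec j$.
For~$|\sigma| = k$ the set~$K = \SBs i_1, \dotsc, i_k \SEs$ and its
linear order are fixed, so every still-free variable is forced; I let
$C_\sigma$ be a single conjunction that sets~$x_{i_t,i_s} = 0$ for each~$s < t$
(the backward within-$K$ arcs), $x_{i,j} = 1$ and~$x_{j,i} = 0$ for each
$i \in K$ and~$j \in [n] \setminus K$, and~$x_{i,j} = 0$ for each
$i,j \in [n] \setminus K$.

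The key invariant I would verify by downward induction is that
$\Var{C_\sigma} \subseteq \SB x_{i,j} \SM i \neq j \SE \setminus
\SB x_{i_s,i_{s'}} \SM s,s' \in [|\sigma|], s < s' \SE$; in words, the forward
arcs committed along the path from the root to~$C_\sigma$ never reappear
below~$C_\sigma$. Given this invariant, decomposability is immediate:
at every internal AND node the literals~$x_{i_s,j}$ are exactly the arcs
just committed, which by the invariant are absent
from~$C_{\sigma \cdot j}$, and at every terminal AND the conjoined
literals are pairwise over distinct variables by construction.
Correctness, $\Mod{C_{()}} = \Mod{\lintop_{n,k}}$, then follows by a
straightforward unwinding of the recursion, since each satisfying
assignment uniquely determines~$K$ together with its linear order, and
hence a unique root-to-leaf path through the DAG.

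The size bound is then routine counting: there are
$\sum_{t=0}^{k} n(n-1)\cdots(n-t+1) = n^{O(k)}$ nodes~$C_\sigma$,
every non-terminal node contributes~$O(nk)$ further gates, and every
terminal node contributes~$O(n^2)$ gates, for a total size of~$n^{O(k)}$.
The only point I expect to require care is the decomposability invariant,
i.e.\ tracking precisely which variables are still permitted inside
$C_{\sigma \cdot j}$. This is essentially the same bookkeeping argument
that underlies Proposition~\ref{prop:upper-bound}, so I do not foresee
a genuine obstacle beyond careful indexing.
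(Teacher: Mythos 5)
Your proof is correct and takes essentially the same approach as the paper's: a top-down construction that places the elements of~$K$ one at a time, committing the newly determined arcs at each conjunction and fixing all remaining variables at the terminal nodes, with decomposability following from the shrinking set of undetermined variables. The only difference is that you index nodes by ordered prefixes rather than by subsets (the paper shares the node~$C_{S \cup \SBs i \SEs}$ across orderings and commits both~$x_{i,j}$ and~$\neg x_{j,i}$ when~$i$ is placed), a bookkeeping variation that stays within the~$n^{O(k)}$ bound.
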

\begin{proof}
Take some~$n,k \in \NN$ with~$n > k$,
and consider~$\lintop_{n,k}$
over the variables~$\SB x_{i,j} \SM i,j \in [n], i < j \SE$.
We construct a DNNF circuit expressing~$\lintop_{n,k}$
as follows.
Let~$T$ be the set of all subsets of~$[n]$ of size at most~$k$---%
that is~$T = \SB S \SM S \subseteq [n], |S| \leq k \SE$.
We know that~$|T| = 1 + \sum\nolimits_{1 \leq i \leq k} \binom{n}{k}$
which is~$O(k n^k)$.
We introduce nodes~$C_{S}$ for each~$S \in T$,
and we introduce
nodes~$C_{i,S}$ for each~$S \in T$ such that~$|S| < k$, and
each~$i \in [n] \setminus S$.
For each~$S \in T$ with~$|S| < k$, we let:
\[ C_{S} = \bigvee\limits_{i \in [n] \setminus S} C_{i,S}. \]
For each~$S \in T$ with~$|S| < k$ and each~$i \in [n] \setminus S$,
we let:
\[ C_{i,S} = \bigwedge\limits_{j \in [n] \setminus S \atop i \neq j}
  x_{i,j} \wedge \neg x_{j,i} \wedge C_{S \cup \SBs i \SEs}. \]
Finally, for each~$S \subseteq [n]$ with~$|S| = k$,
we let:
\[ C_{S} = \bigwedge\limits_{i,j \in [n] \setminus S \atop i \neq j}
  \neg x_{i,j} \wedge \neg x_{j,i}. \]
We then let the node~$C_{\emptyset}$ be the root of the Boolean circuit.
It is straightforward to verify that the circuit
expresses~$\lintop_{n,k}$.
Moreover, the circuit is in DNNF
because each node~$C_{S}$ contains only
variables~$x_{i,j}$ with~$i,j \in [n] \setminus S$.
It is readily verified that the circuit is of size~$n^{O(k)}$---%
there are~$O(kn^k)$ nodes~$C_S$
and~$O(kn^{k+1})$ nodes~$C_{i,S}$.
\end{proof}

\paragraph*{Acknowledgments}
Thanks to Stefan Mengel for pointing out the upper
bound of Proposition~\ref{prop:upper-bound}.

%\vfill
%\pagebreak\clearpage
\DeclareRobustCommand{\DE}[3]{#3}
%\bibliographystyle{plain}
%\bibliography{literature}

\end{document}